\newtheorem{theorem}{Theorem}
\newtheorem{corollary}{Corollary}
\newtheorem{remark}{Remark}
\newtheorem{proposition}{Proposition}
\newtheorem{lemma}{Lemma}
\newenvironment{proof}[1][Proof]{\noindent\textbf{#1.} }{\ \rule{0.5em}{0.5em}}
\newcommand{\bc}{\mathbf{c}}
\def\wt{\widetilde}
\def\b{\mathbf}
\newcommand{\m}{\mathcal}
\def\wt{\widetilde}
\begin{document}

\title{A VC-dimension-based Outer Bound on the Zero-Error Capacity of the Binary Adder Channel}
\author{\IEEEauthorblockN{Or Ordentlich}
\IEEEauthorblockA{Tel Aviv University\\
ordent@eng.tau.ac.il}
\and
\IEEEauthorblockN{Ofer Shayevitz}
\IEEEauthorblockA{Tel Aviv University\\
ofersha@eng.tau.ac.il}
\IEEEoverridecommandlockouts
\IEEEcompsocitemizethanks{
\IEEEcompsocthanksitem
The work of O. Ordentlich was supported by the Admas Fellowship Program of the Israel Academy of Science and Humanities, a fellowship from The Yitzhak and Chaya Weinstein Research Institute for Signal Processing at Tel Aviv University, and the Feder Family Award. The work of O. Shayevitz was supported in part by the Marie Curie Career Integration Grant (CIG), Grant agreement no. 631983, and in part by the Israel Science Foundation under Grant No.  1367/14.
}}

\parskip 3pt

\maketitle

\begin{abstract}
The binary adder is a two-user multiple access channel whose inputs are binary and whose output is the real sum of the inputs. While the Shannon capacity region of this channel is well known, little is known regarding its zero-error capacity region, and a large gap remains between the best inner and outer bounds. In this paper, we provide an improved outer bound for this problem. To that end, we introduce a soft variation of the Saur-Perles-Shelah Lemma, that is then used in conjunction with an outer bound for the Shannon capacity region with an additional common message.
\end{abstract}

\section{Introduction}\label{sec:intro}
The binary adder is a multiple access channel with two binary inputs $X_1$ and $X_2$ and output $Y=X_1+X_2\in\{0,1,2\}$. The capacity region of this channel is well known and consists of all rate-pairs $(R_1,R_2)$ satisfying
\begin{align}
\nonumber R_1 &\leq 1, \\
\nonumber R_2 &\leq 1, \\
R_1+R_2 &\leq \tfrac{3}{2}. \label{eq:ShannonCap}
\end{align}
The zero-error capacity region of the binary adder channel is the closure of the set of all rate-pairs $(R_1,R_2)$ such that for $n$ large enough there exist two codebooks $\m{C}_1,\m{C}_2\subseteq \{0,1\}^n$ with cardinalities $|\m{C}_i|=2^{n R_i}$, $i=1,2$, such that all elements in the \textit{sumset}
\begin{align}
\m{C}_1+\m{C}_2\triangleq\{\b{a}+\b{b} : \b{a}\in\m{C}_1,\b{b}\in\m{C}_2\}\quad \textrm{with multiplicities}\label{eq:sumset}
\end{align}
appear with multiplicity exactly one, where addition is taken over the reals. We say that the pair $(R_1,R_2)$ is \emph{admissible} if it belongs to the zero-error capacity region, and we call the codebooks $(\m{C}_1,\m{C}_2)$ a \emph{zero-error codebook pair} if all elements in their sumset $\m{C}_1+\m{C}_2$ appear with multiplicity exactly one.

Despite its apparent simplicity, the problem of characterizing the zero-error capacity region of this channel is wide open. Many inner bounds have been established over the last four decades, see, e.g.,~\cite{Lindstrom69,Tilborg78,kl78,Weldon78,klwy83,bt85,bb98,UL98,ab99,mo05}. However, to date, the best known lower bound on the zero error sum-capacity is $\log(240/6)\approx 1.3178$~\cite{mo05}, where logarithms are taken in base $2$. To put this result in perspective, note that a sum-rate of $R_1+R_2=\tfrac{1}{2}\log(6)\approx 1.2924$ can be attained by the two-dimensional construction $\m{C}_1=\{00,11\}$, $\m{C}_2=\{00,01,10\}$. In terms of outer bounds, the current state of knowledge is even less satisfying. Clearly, any admissible pair must be inside the Shannon capacity region and must therefore satisfy~\eqref{eq:ShannonCap}. However, to date the only improvement upon the trivial outer bound~\eqref{eq:ShannonCap} was obtained by Urbanke and Li~\cite{UL98} who showed that near the corner points $(1,\tfrac{1}{2})$ and $(\tfrac{1}{2},1)$ the zero-error capacity region is strictly contained in~\eqref{eq:ShannonCap}. Specifically, for $R_1=1$ it was shown that the maximal admissible $R_2$ must satisfy $R_2<0.49216$. Our main result is a new outer bound on the zero-error capacity region that strictly improves upon the bound from~\cite{UL98}.

Write $h(p) = -p\log{p}-(1-p)\log{(1-p)}$ for the binary entropy function, and $h^{-1}(x)$ for its inverse restricted to $[0,\tfrac{1}{2}]$. For $0\leq p,q \leq 1$, write $p\star q \triangleq p(1-q) + q(1-p)$. Let
\begin{align}\label{eq:L}
L(\eta) &\triangleq h(\eta) + 1- \eta 
\end{align}
and 
\begin{align}\label{eq:J}
J(p,\eta) &\triangleq \left\{\begin{array}{cc}
                                                                                             2h\left(\frac{1}{2}\left(1-\sqrt{1-2\eta}\right)\right)-\eta & \eta\geq p\star p \\ \\ 
                                                                                             2h\left(\frac{1}{2}\left(1-\frac{1-\eta-p\star p}{\sqrt{1-2 (p\star p)}}\right)\right) & \\
                                                                                             -\frac{1}{2}\left(1-\frac{(1-\eta-p\star p)^2}{1-2 (p\star p)} \right) & \eta<p\star p
                                                                                           \end{array}
 \right.
  \end{align}
and
\begin{align}\label{eq:Rsum}
R_\Sigma(r_0,r_1) \triangleq \max_{h^{-1}(r_1)\leq \eta\leq \frac{1}{2}} \min\{L(\eta),\, J(h^{-1}(r_1),\eta)+r_0\}
\end{align}

Our main result is the following.
\begin{theorem}\label{thm:main}
Any admissible $(R_1,R_2)$ satisfies
\begin{align*}
  R_2 < \min_{0\leq \alpha \leq h^{-1}(R_1)}(1-\alpha) \left(R_\Sigma\left(\frac{\alpha}{1-\alpha},\,\Gamma\right) - \Gamma\right)
\end{align*}
where
\begin{align*}
\Gamma = \Gamma(R_1,\alpha) \triangleq h\left(\frac{h^{-1}(R_1)-\alpha}{1-\alpha}\right)
\end{align*}
\end{theorem}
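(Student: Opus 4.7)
The plan is to combine two ingredients: a \emph{soft} variant of the Sauer-Perles-Shelah (SPS) lemma, and a Shannon-type outer bound on the sum-capacity of the binary adder channel with an additional common message available at both encoders. The soft SPS lemma, given any codebook $\m{C}_1$ of rate $R_1$ and a parameter $\alpha \in [0, h^{-1}(R_1)]$, should produce a subset $S \subseteq [n]$ of size $\alpha n$ together with a structured sub-family of $\m{C}_1$ whose restriction to $S$ exhausts all of $\{0,1\}^S$ while its restriction to $S^c$ retains an entropy rate of roughly $\Gamma$ per coordinate. Classical SPS yields only the combinatorial shattering; the soft refinement, which simultaneously preserves residual rate on $S^c$, is the main technical novelty and is what allows a non-vacuous outer bound.

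Given any admissible $(R_1,R_2)$ realized by a zero-error pair $(\m{C}_1,\m{C}_2)$, I proceed as follows for each $\alpha$. First, apply the soft SPS lemma to $\m{C}_1$ to obtain the shattered set $S$ and the accompanying sub-structure. Then reinterpret the induced communication on the $(1-\alpha)n$ coordinates in $S^c$ as a Shannon problem for the binary adder with a common message: the pattern $\b{X}_1|_S$ plays the role of a common message of rate $\alpha/(1-\alpha)$ per residual coordinate, encoder~1 has an additional private rate of $\Gamma$ per residual coordinate (the entropy preserved by the soft SPS step), and encoder~2 contributes a private rate of $R_2/(1-\alpha)$ per residual coordinate. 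Since the induced configuration must lie in the common-message Shannon region on the $(1-\alpha)n$-length problem, it must obey $\Gamma + R_2/(1-\alpha) \leq R_\Sigma(\alpha/(1-\alpha),\,\Gamma)$; rearranging and optimizing over $\alpha$ yields exactly the claimed inequality, with strictness following from standard asymptotic slack in converting a Shannon-capacity statement for a sequence of zero-error codes.

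The hard step will be formulating and proving the soft SPS lemma with the precise quantitative form used above. Classical SPS is a purely combinatorial existence statement about one shattered set and controls nothing about $\m{C}_1$ on $S^c$; naively picking a single representative codeword per pattern in $\{0,1\}^S$ would discard all residual rate and reproduce only the Shannon bound~\eqref{eq:ShannonCap}. The soft version must, by a randomized selection of $S$ together with an averaging argument over codewords (or an entropy-method refinement), guarantee that a fraction-$\Gamma$ rate on $S^c$ survives, with $\Gamma = h((h^{-1}(R_1)-\alpha)/(1-\alpha))$ being exactly what remains once the ``shattered'' entropy $\alpha$ is peeled off from $R_1$ under a type-based decomposition.

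A secondary but substantial subtask is establishing $R_\Sigma(r_0,r_1)$ as the correct outer bound on the common-message binary adder sum-capacity. The piecewise form of $J(p,\eta)$ in~\eqref{eq:J} signals a case analysis based on whether the output type $\eta$ exceeds $p\star p$, which is the middle-symbol probability induced by independent $\mathrm{Bernoulli}(p)$ inputs. In the regime $\eta \geq p\star p$ the bound reduces to a standard $H(Y)-$type constraint, captured jointly with $L(\eta)=h(\eta)+1-\eta$; in the regime $\eta < p\star p$, where the inputs must be negatively correlated to suppress middle symbols below the independent value, a Cauchy-Schwarz or correlation inequality tightens the mutual-information term, producing the more delicate second branch of $J$.
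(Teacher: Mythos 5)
Your plan is essentially the paper's proof: a soft Sauer--Perles--Shelah lemma guaranteeing a set $S$ of size $\alpha n$ that is $2^{n\beta}$-shattered by $\m{C}_1$ with $\beta=(1-\alpha)\Gamma$, followed by a reduction to the binary adder channel with a common message on the coordinates of $\overline{S}$, and an application of the outer bound $r_0+r_1+r_2\le R_\Sigma(r_0,r_1)$. The two ingredients, their roles, and the final inequality all match.

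One concrete repair is needed in the rate accounting of the reduction. If the common message had rate $r_0=\alpha/(1-\alpha)$ and encoder 2's private rate were $R_2/(1-\alpha)$ as you state, the constraint $r_0+r_1+r_2\le R_\Sigma(r_0,r_1)$ would read $\tfrac{\alpha}{1-\alpha}+\Gamma+\tfrac{R_2}{1-\alpha}\le R_\Sigma$, which is not the inequality you then write down and would give a different, unjustified bound. The point is that the common message is the projection pattern $\b{g}=\b{c}_2(S)$ itself (equivalently its complement $\b{c}_1(S)$), so part of $\m{C}_2$'s rate is consumed by the common message: one partitions $\m{C}_2$ into fibers $\m{C}_{2,\b{g}}$, uses a pigeonhole step to extract $2^{n\alpha'}$ fibers ($\alpha'\le\alpha$) of a \emph{common} cardinality roughly $2^{n(R_2-\alpha')}$ (the system definition requires all $M_2$'s equal), and sets $r_2=(R_2-\alpha')/(1-\alpha)$; then $r_0+r_1+r_2=\Gamma+R_2/(1-\alpha)$ and your displayed inequality follows. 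You should also justify why the projected family is a valid zero-error system: pairing $\m{C}_{1,\overline{\b{g}}}$ with $\m{C}_{2,\b{g}}$ makes every pair an $S$-complement pair (sum equal to $\b{1}$ on $S$), so equal sums on $\overline{S}$ across any two such pairs would force equal sums on all of $[n]$, contradicting the zero-error property of $(\m{C}_1,\m{C}_2)$ --- this Weldon-type observation is what legitimizes the reduction. Your reading of the soft SPS conclusion (each pattern on $S$ attained with multiplicity $2^{n(1-\alpha)\Gamma}$, hence private rate $\Gamma$ for encoder 1) and your diagnosis of the two branches of $J(p,\eta)$ are consistent with the paper, which indeed defers the proof of the $R_\Sigma$ bound to its full version.
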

\begin{figure}[htbp]
  \centering
      \includegraphics[width=0.9\columnwidth]{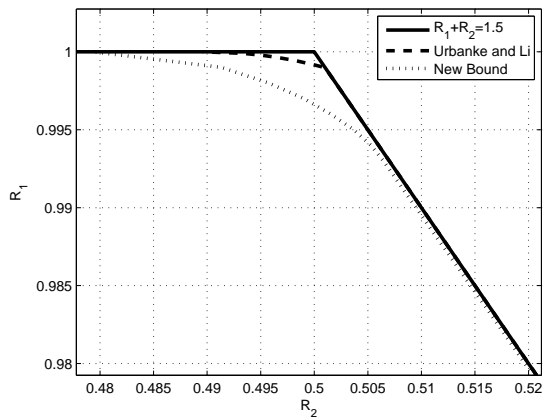}
  \caption{Illustration of the three outer bounds.}\label{fig1}
\end{figure}
For the maximal value of $R_1=1$, this bound yields $R_2<0.4794$. Figure \ref{fig1} depicts the three outer bounds for values of $R_1$ close to $1$. The question of whether $R_1+R_2 = \tfrac{3}{2}$ is admissible for some $(R_1,R_2)$ remains open.

\section{Proof of Theorem \ref{thm:main}}

We first note that it suffices to prove inadmissibility in the limit of large $n$, by the simple fact that if $(\m{C}_1,\m{C}_2)$ is a zero-error codebook pair, so is the concatenation $(\m{C}_1\times\m{C}_1,\m{C}_2\times\m{C}_2)$. To avoid cumbersome notations, we can therefore assume without loss of generality that $nR_1$ and $nR_2$ (and all similar quantities) are integers.

\subsection{Motivation}
Let $\m{C}\subseteq\{0,1\}^n$ be a codebook and let $S\subseteq [n]$ be a subset of coordinates, where $[n] \triangleq\{1,\ldots,n\}$. The projection $\b{a}(S)$ maps the vector $\b{a}\in\{0,1\}^n$ to a vector in $\{0,1\}^{|S|}$ by taking only the values of $\b{a}$ on the coordinates in $S$. We say that $S$ is \textit{shattered} by $\m{C}$ \cite{AlonSpencer}, if the projection multiset
\begin{align*}
P^+_S(\m{C})\triangleq\left\{\b{c}(S): \b{c}\in\m{C} \right\} \quad \text{with multiplicities}
\end{align*}
of $\m{C}$ on $S$ contains all $2^{|S|}$ binary vectors of length $|S|$.\footnote{Taking the multiplicities into account in the definition of the projection multiset is not necessary here, but will become important in the sequel.} A codebook $\m{C}$ is said to be \textit{systematic} if it is shattered by some $S\subseteq [n]$ of cardinality $ \log |\m{C}|$. Weldon proved the following.
\begin{theorem}[Weldon~\cite{Weldon78}]\label{thm:weldon}
  If $\m{C}_1$ is systematic and ($\m{C}_1,\m{C}_2$) form a zero-error codebook pair, then $R_2 \leq (1-R_1)\log{3}$.
\end{theorem}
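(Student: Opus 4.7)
The plan is to bound $|\m{C}_2|$ by constructing an explicit injection from $\m{C}_2$ into $\{0,1,2\}^{|T|}$, where $T \triangleq [n]\setminus S$ for the shattering set $S$ guaranteed by the systematic hypothesis. Since $|S| = nR_1 = \log|\m{C}_1|$, the shattering condition forces the projection $\m{C}_1 \to \{0,1\}^{|S|}$ to be a bijection, so I can write $\m{C}_1 = \{(\b{a},\, f(\b{a})) : \b{a}\in\{0,1\}^{|S|}\}$ for some $f : \{0,1\}^{|S|}\to\{0,1\}^{|T|}$, where throughout I view vectors as $S$-coordinates followed by $T$-coordinates.

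The core step is a clever pairing. For each $\b{c}_2 = (\b{b}, \b{e}) \in \m{C}_2$ I would pair $\b{c}_2$ with the unique $\b{c}_1 \in \m{C}_1$ whose $S$-projection equals $\b{1}-\b{b}$, namely $\b{c}_1 = (\b{1}-\b{b},\, f(\b{1}-\b{b}))$. By construction the $S$-part of $\b{c}_1+\b{c}_2$ equals the all-ones vector $\b{1} \in \{0,1,2\}^{|S|}$, and the $T$-part equals $f(\b{1}-\b{b})+\b{e} \in \{0,1,2\}^{|T|}$. Define $\phi : \m{C}_2 \to \{0,1,2\}^{|T|}$ by $\phi(\b{c}_2) \triangleq f(\b{1}-\b{b})+\b{e}$.

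Injectivity of $\phi$ will follow directly from the zero-error hypothesis: if $\phi(\b{c}_2)=\phi(\b{c}_2')$, the two paired sums $\b{c}_1+\b{c}_2$ and $\b{c}_1'+\b{c}_2'$ agree both on $S$ (both equal $\b{1}$) and on $T$, hence are equal in $\{0,1,2\}^n$, forcing $(\b{c}_1,\b{c}_2)=(\b{c}_1',\b{c}_2')$ and in particular $\b{c}_2=\b{c}_2'$. This yields $|\m{C}_2| \leq 3^{|T|} = 3^{n(1-R_1)}$, which on taking logarithms is exactly $R_2 \leq (1-R_1)\log 3$.

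The one delicate point is the choice of target $S$-profile $\b{s}=\b{1}$. For the equation $\b{a}+\b{b}=\b{s}$ over $\{0,1\}^{|S|}\times\{0,1\}^{|S|}$ to admit a (necessarily unique) solution $\b{a}$ for \emph{every} $\b{b}\in\{0,1\}^{|S|}$, each coordinate $\b{s}_i$ must lie in $\{0,1\}$ (to handle $\b{b}_i=0$) and in $\{1,2\}$ (to handle $\b{b}_i=1$), forcing $\b{s}_i=1$. Using $\b{s}=\b{1}$ thus guarantees that every $\b{c}_2$ contributes exactly one sum landing in the chosen $S$-slice of $\m{C}_1+\m{C}_2$, so no codeword of $\m{C}_2$ is lost in the injection; after that observation, the rest is a one-line pigeonhole on $\{0,1,2\}^{|T|}$.
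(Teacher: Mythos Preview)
Your proof is correct and follows essentially the same approach as the paper: pair each $\b{c}_2$ with a codeword $\b{c}_1\in\m{C}_1$ so that the $S$-projections sum to $\b{1}_{|S|}$, then use the zero-error property to conclude that the $\overline{S}$-projections of the resulting sums are all distinct, giving $|\m{C}_2|\leq 3^{n(1-R_1)}$. Your version is slightly more explicit in exploiting that $|\m{C}_1|=2^{|S|}$ forces the $S$-projection to be a bijection (so the pairing is a well-defined function), but this is just a cleaner packaging of the same argument.
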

\begin{proof}
Let $S$ be a set of cardinality $  nR_1  $ that is shattered by $\m{C}_1$. For every $\b{c}_2\in\m{C}_2$, there exists a $\b{c}_1\in \m{C}_1$ such that $\b{c}_1$ and $\b{c}_2$ are an \textit{$S$-complement} pair, i.e.,
  \begin{align}\label{eq:all1}
\b{c}_1(S)+ \b{c}_2(S) = \b{1}_{|S|},
  \end{align}
where $\b{1}_{m}$ denotes a vector of $1$s of length $m$.
Hence, there are at least $2^{nR_2}$ such $S$-complement pairs. By the assumption that ($\m{C}_1,\m{C}_2$) form a zero-error codebook pair, $\b{c}_1(\overline{S})+ \b{c}_2(\overline{S})$ must be distinct for all $S$-complement pairs. Therefore, the number of such pairs cannot be larger than $3^{|\overline{S}|} = 3^{n(1-R_1)}$, and the theorem follows.
\end{proof}

For example, if $\m{C}_1$ is systematic and $R_2=1$, then the theorem implies that $R_1 \leq 0.37$. This strong bound is a consequence of the restriction to a systematic codebook. However, we note that the only property used in the proof is the existence of a large shattered set. Hence, any lower bound on the size of a maximal shattered set in a general codebook $\m{C}_1$ would lead to a similar result. The cardinality of the maximal set shattered by a code $\m{C}\subseteq\{0,1\}^n$ is referred to in the machine-learning literature as its \emph{Vapnik-Chervonenkis dimension}, or \emph{VC-dimension}. The Sauer-Perles-Shelah lemma provides a lower bound on the VC-dimension of a code.
\begin{lemma}[Sauer-Perles-Shelah Lemma \cite{AlonSpencer}]
If the cardinality of the maximal subset shattered by the codebook $\m{C}\subseteq\{0,1\}^n$ is $d$, then 
\begin{align*}
|\m{C}| \leq \sum_{k=0}^d{n \choose k }. 
\end{align*}
\end{lemma}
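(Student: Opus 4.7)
I would prove the lemma by induction on $n$, using a simple coordinate-splitting argument that telescopes via Pascal's identity. The base cases $n=0$ (where $|\m{C}|\leq 1=\binom{0}{0}$) and $d=0$ (where any two distinct codewords must disagree on some coordinate $i$, so $\{i\}$ would be shattered, forcing $|\m{C}|\leq 1$) are immediate, so assume $n,d\geq 1$ and that the claim holds for all smaller ambient dimensions.

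Fix the last coordinate and partition $\m{C}$ into $\m{C}_0=\{\b{c}\in\m{C}:c_n=0\}$ and $\m{C}_1=\{\b{c}\in\m{C}:c_n=1\}$, each viewed as a subset of $\{0,1\}^{n-1}$ after dropping the $n$-th coordinate. Let $\m{C}'=\m{C}_0\cap\m{C}_1\subseteq\{0,1\}^{n-1}$. The elementary identity $|\m{C}_0|+|\m{C}_1|=|\m{C}_0\cup\m{C}_1|+|\m{C}_0\cap\m{C}_1|$ gives $|\m{C}|=|\m{C}_0\cup\m{C}_1|+|\m{C}'|$.

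The crucial step is bounding the VC-dimensions of these two smaller families in terms of $d$. First, $\m{C}_0\cup\m{C}_1$ has VC-dimension at most $d$, since any $T\subseteq[n-1]$ shattered by $\m{C}_0\cup\m{C}_1$ is automatically shattered by $\m{C}$ (the projection onto $T$ simply ignores coordinate $n$). Second, $\m{C}'$ has VC-dimension at most $d-1$: if $T\subseteq[n-1]$ is shattered by $\m{C}'$, then every binary pattern of length $|T|$ appears in $\m{C}$ both with $c_n=0$ and with $c_n=1$, so $T\cup\{n\}$ is shattered by $\m{C}$, giving $|T|+1\leq d$. This step is where the definition of $\m{C}'$ as the \emph{intersection} (rather than the union) is essential, and it is the only part of the argument requiring genuine care.

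Applying the inductive hypothesis to both families yields
\[
|\m{C}|\;\leq\;\sum_{k=0}^{d}\binom{n-1}{k}\;+\;\sum_{k=0}^{d-1}\binom{n-1}{k},
\]
which by Pascal's identity $\binom{n-1}{k}+\binom{n-1}{k-1}=\binom{n}{k}$ (with the usual convention $\binom{n-1}{-1}=0$) collapses to $\sum_{k=0}^{d}\binom{n}{k}$, closing the induction. I do not anticipate any serious obstacle; the entire argument is combinatorial bookkeeping once the two VC-dimension bounds on $\m{C}_0\cup\m{C}_1$ and $\m{C}'$ are in hand.
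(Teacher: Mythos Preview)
Your inductive coordinate-splitting argument is correct and is one of the standard textbook proofs of the Sauer--Perles--Shelah Lemma; all the steps (the two VC-dimension bounds on $\m{C}_0\cup\m{C}_1$ and on $\m{C}'$, and the Pascal collapse) are valid as written.

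Note, however, that the paper does not actually supply a proof of this lemma: it is stated with a citation to~\cite{AlonSpencer} and used as a known result. What the paper \emph{does} prove is the soft generalization (Lemma~\ref{lem:soft_sauer}), and for that it uses a genuinely different technique---the down-shifting argument of~\cite{Alon83}. One repeatedly replaces sets $G\ni i$ with $G\setminus\{i\}\notin\m{G}$ by $G\setminus\{i\}$ until the family becomes monotone, checks that $k$-shattering can only be lost (never gained) under each shift, and then bounds monotone families directly by a layer-by-layer counting argument. Specialized to $k=1$, that same shifting-plus-monotone argument yields the classical lemma as well: once $\m{G}$ is monotone, any $G\in\m{G}$ is itself shattered by $\m{G}$, so every member has cardinality at most $d$ and $|\m{G}|\leq\sum_{k=0}^d\binom{n}{k}$. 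Your inductive proof is slicker for the basic lemma, but the shifting approach is what the paper needs because it extends cleanly to the $k$-shattered setting, where one must track multiplicities in the projection multiset and a simple induction on $n$ does not give the layered control required for Lemma~\ref{lem:soft_sauer}.
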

\begin{remark}
  It is easy to see that this bound is attained with equality if $\m{C}$ is a $n$-Hamming ball of radius $d$.
\end{remark}
\begin{corollary}
Let $\varepsilon>0$. If $|\m{C}| = 2^{n(R+\varepsilon)}$ then for any $n$ large enough, $\m{C}$ shatters a set $S\subseteq [n]$ with $|S|\geq n h^{-1}(R)$.
\end{corollary}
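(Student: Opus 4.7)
The proof is a short deduction from the Sauer--Perles--Shelah lemma combined with the standard entropy upper bound on tails of binomial coefficients, so the plan is essentially a two-line calculation that I would organize as follows.

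First, let $d$ denote the VC-dimension of $\mathcal{C}$, i.e., the cardinality of the largest subset of $[n]$ shattered by $\mathcal{C}$. By the Sauer--Perles--Shelah lemma quoted just above, $|\mathcal{C}| \leq \sum_{k=0}^{d}\binom{n}{k}$. I would then invoke the classical bound $\sum_{k=0}^{d}\binom{n}{k} \leq 2^{nh(d/n)}$, valid whenever $d/n \leq \tfrac12$, to conclude $|\mathcal{C}| \leq 2^{nh(d/n)}$ as long as $d \leq n/2$.

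Next I would argue by contradiction. Suppose $d < n h^{-1}(R)$. Since $h$ is strictly increasing on $[0,\tfrac12]$ and $h^{-1}(R)\leq\tfrac12$, this forces $d/n < h^{-1}(R)\leq\tfrac12$, so both the applicability of the entropy bound and the strict inequality $h(d/n) < R$ hold. Combining with Sauer--Perles--Shelah gives
\begin{align*}
2^{n(R+\varepsilon)} = |\mathcal{C}| \leq 2^{nh(d/n)} < 2^{nR},
\end{align*}
which is absurd for any $\varepsilon>0$. Hence $d \geq n h^{-1}(R)$, yielding a shattered set of the required size.

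The only mild obstacle is the bookkeeping at the boundary $h^{-1}(R)=\tfrac12$ (i.e.\ $R=1$): there the hypothesis $|\mathcal{C}|=2^{n(R+\varepsilon)}$ exceeds $2^{n}$ and so is vacuous. For $R<1$, $h^{-1}(R)<\tfrac12$ strictly, so the entropy bound applies with room to spare. The qualifier ``for $n$ large enough'' and the slack $\varepsilon>0$ absorb any integer-rounding issues coming from replacing the real quantity $n h^{-1}(R)$ by an integer VC-dimension, so no further delicate estimates are needed.
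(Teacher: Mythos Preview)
Your proof is correct and is exactly the standard two-step argument the paper leaves implicit: apply Sauer--Perles--Shelah to get $|\mathcal{C}|\le\sum_{k=0}^d\binom{n}{k}$, then the entropy bound $\sum_{k\le d}\binom{n}{k}\le 2^{nh(d/n)}$ for $d\le n/2$, and contrapose. The paper states the corollary without proof, and your derivation is precisely what is intended; in fact your argument shows the conclusion holds for every $n$, so the ``$n$ large enough'' and the slack $\varepsilon$ are not even needed here (they are present only to parallel the soft version in Corollary~\ref{cor:soft_sauer}).
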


Plugging the above into Weldon's argument yields:
\begin{proposition}
If ($\m{C}_1,\m{C}_2$) form a zero-error codebook pair, then $R_2 \leq (1-h^{-1}(R_1))\log{3}$.
\end{proposition}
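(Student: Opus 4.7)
The plan is to simply feed the Sauer--Perles--Shelah corollary into Weldon's argument. Fix $\varepsilon>0$ and write $nR_1 = n(R_1 - \varepsilon) + n\varepsilon$; then the corollary, applied to $\m{C}_1$, guarantees that for all sufficiently large $n$ there exists some $S\subseteq[n]$ shattered by $\m{C}_1$ with $|S|\geq n\,h^{-1}(R_1-\varepsilon)$. This $S$ now plays the role that the full systematic coordinate set plays in Theorem~\ref{thm:weldon}.

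Next I would run Weldon's argument verbatim against this $S$. Because $S$ is shattered, for every $\b{c}_2\in\m{C}_2$ the vector $\b{1}_{|S|} - \b{c}_2(S)$ appears as some projection $\b{c}_1(S)$ with $\b{c}_1\in\m{C}_1$; choosing one such $\b{c}_1$ per $\b{c}_2$ produces at least $|\m{C}_2|=2^{nR_2}$ distinct $S$-complement pairs, in the sense of \eqref{eq:all1}. Since all of these pairs already agree on $S$, the zero-error assumption forces their partial sums $\b{c}_1(\overline{S})+\b{c}_2(\overline{S})\in\{0,1,2\}^{|\overline{S}|}$ to be pairwise distinct, so
\begin{align*}
2^{nR_2} \;\leq\; 3^{|\overline{S}|} \;\leq\; 3^{\,n(1-h^{-1}(R_1-\varepsilon))}.
\end{align*}
Taking logarithms, dividing by $n$, and then letting $\varepsilon\downarrow 0$ and invoking continuity of $h^{-1}$ on $[0,1]$ yields $R_2\leq (1-h^{-1}(R_1))\log 3$, as required.

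Since the excerpt essentially tells us to carry out exactly this composition — the VC-dimension lower bound plugged into Weldon's template — I do not expect a real obstacle; the only thing to be careful about is the $\varepsilon$-slack inherent in the corollary (the Sauer--Perles--Shelah bound is attained by Hamming balls, so no strict loss occurs in the large-$n$ limit), and the fact that we may without loss of generality assume all relevant quantities such as $nR_1,nR_2,n\,h^{-1}(R_1-\varepsilon)$ are integers by the concatenation remark at the start of Section~II.
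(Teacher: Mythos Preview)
Your proposal is correct and matches the paper's approach exactly: the paper simply states that the proposition follows by ``plugging the above into Weldon's argument,'' and your write-up carries out precisely that composition, including the appropriate handling of the $\varepsilon$-slack from the Sauer--Perles--Shelah corollary and the limiting argument via continuity of $h^{-1}$.
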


Unfortunately, this bound is trivial since for any $R_1$, we have that $R_1+(1-h^{-1}(R_1))\log{3} > \tfrac{3}{2}$. This stems from two main weaknesses. First, we have taken the worst case assumption that each codeword $\b{c}_2\in \m{C}_2$ has only one codeword $\b{c}_1\in\m{C}_1$ such that $\b{c}_1$ and $\b{c}_2$ are $S$-complement, where $S$ is a shattered set in $\m{C}_1$. Second, bounding the number of $S$-complement pairs by $3^{|\overline{S}|}$ may be loose, as it ignores the sumset structure. In the next two subsections, we provide the technical tools to handle each of these weaknesses, and apply them to prove the theorem in the subsection that follows.

\subsection{A Soft  Sauer-Perles-Shelah Lemma}

Let $\m{C}\subseteq\{0,1\}^n$ be a codebook and let $S\subseteq [n]$ be a subset of coordinates. We say that $S$ is \textit{$k$-shattered} by $\m{C}$, if the projection multiset $P_S^+(\m{C})$ of $\m{C}$ on $S$ contains all binary vectors in $\{0,1\}^{|S|}$ each with multiplicity of at least $k$. For $k=1$, this definition reduces to the regular definition of a shattered set.

The proof of the following lemma is given in Section~\ref{sec:proof_soft_sauer}.
\begin{lemma}\label{lem:soft_sauer}
If the cardinality of the maximal subset that is $k$-shattered by the codebook $\m{C}\subseteq\{0,1\}^n$ is $d-1$, then
  \begin{align*}
     |\m{C}|\leq \sum_{t=1}^{t^*}{n\choose t}  + {n\choose t^*}\sum_{t=t^*+1}^{n}\frac{{t^*\choose d}}{{t \choose d}}
  \end{align*}
where $t^*$ is the smallest integer $t$ satisfying ${n-d \choose t-d} \geq k$ if such an integer exists, and $t^*=n$ otherwise.
\end{lemma}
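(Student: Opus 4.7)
My plan is to reduce $\mathcal{C}$ to a down-closed codebook via shifting, characterize $k$-shattering in that structured setting, and then run a counting argument that splits at the threshold $t^*$.

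First, I would iterate the shift operators $\sigma_i$ that replace each codeword by its version with $c_i=0$ whenever doing so does not duplicate an existing codeword. The process preserves $|\mathcal{C}|$ and terminates at a down-closed code $\mathcal{C}^*$. The crucial sub-claim, a $k$-analog of the classical VC-dimension preservation lemma, is that if a subset $S$ is $k$-shattered by $\mathcal{C}^*$ then $S$ was already $k$-shattered by $\mathcal{C}$; establishing this requires a case analysis of how the multiplicities in the projection multiset $P_S^+$ change under a single $\sigma_i$, treating $i\in S$ and $i\notin S$ separately.

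Next, I would exploit the simplicial complex structure. Identifying $\mathcal{C}^*$ with the support complex $\mathcal{T}=\{\mathrm{supp}(\mathbf{c}):\mathbf{c}\in\mathcal{C}^*\}$, a short observation is that $S$ is $k$-shattered if and only if at least $k$ faces of $\mathcal{T}$ contain $S$. The ``only if'' direction follows by inspecting the all-ones projection on $S$. For the ``if'' direction, given $k$ codewords whose supports contain $S$, their values on $[n]\setminus S$ must be pairwise distinct, so zeroing out coordinates in $S$ according to any $\mathbf{v}\in\{0,1\}^{|S|}$ yields $k$ distinct codewords projecting to $\mathbf{v}$ on $S$. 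The hypothesis then becomes the clean statement that every $d$-subset of $[n]$ is contained in at most $k-1$ faces of $\mathcal{T}$.

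Finally, I would control $|\mathcal{C}|=|\mathcal{T}|$ through the face-size profile $f_w=|\{T\in\mathcal{T}:|T|=w\}|$. Double counting pairs $(S,T)$ with $|S|=d$ and $S\subseteq T\in\mathcal{T}$ gives the global constraint $\sum_{w\geq d} f_w\binom{w}{d}\leq (k-1)\binom{n}{d}$. The role of $t^*$ is exactly as the weight at which the $k$-shattering constraint becomes active: for $w\leq t^*$ we have $\binom{n-d}{w-d}<k$, so the constraint is automatically satisfied and only the trivial bound $f_w\leq\binom{n}{w}$ applies, producing the first summand $\sum_{t=1}^{t^*}\binom{n}{t}$; for $w>t^*$ a finer bound $f_w\leq\binom{n}{t^*}\binom{t^*}{d}/\binom{w}{d}$ produces the tail. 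I expect the principal obstacles to be twofold: preserving $k$-shattering under the shift in the first step, and establishing the tail bound for $w>t^*$, which will likely require a dedicated combinatorial comparison between $w$-faces and their $t^*$-subfaces rather than pure double counting.
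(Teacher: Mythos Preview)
Your plan matches the paper's proof: shift to a monotone (down-closed) family while showing that $k$-shattering can only decrease, then use that in a monotone family a $d$-set $S$ is $k$-shattered iff it lies in at least $k$ faces, and finish by bounding the layer sizes $f_w$. The one place you overcomplicate things is the tail bound for $w>t^*$: no ``dedicated combinatorial comparison between $w$-faces and their $t^*$-subfaces'' is needed. The paper simply applies your own per-$S$ constraint layer by layer---pigeonhole on the $f_w$ faces of size $w$ shows some $d$-set lies in at least $f_w\binom{w}{d}/\binom{n}{d}$ of them, hence $f_w<k\binom{n}{d}/\binom{w}{d}\leq\binom{n}{t^*}\binom{t^*}{d}/\binom{w}{d}$ by the definition of $t^*$. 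Your global double count $\sum_{w\geq d} f_w\binom{w}{d}\leq(k-1)\binom{n}{d}$ is the aggregate of these per-layer inequalities and is strictly weaker; on its own it does not deliver the stated bound, so make sure you run the pigeonhole on each layer separately rather than only globally.
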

\begin{remark}
Note that if $k={n-d \choose t^*-d}$ for some $t^*$, then our bound is tight for a $n$-Hamming ball of radius $t^*$, up to a multiplicative gap of $O(n/d)$. This coincides with the Sauer-Perles-Shelah Lemma for $k=1$ (and $t^*=d$), up to the aforementioned multiplicative factor. Since we are only interested in exponential behavior, no attempt has been made to reduce this gap.
\end{remark}

\begin{corollary}\label{cor:soft_sauer}
Let $\varepsilon>0$. If $|\m{C}| = 2^{n(R+\varepsilon)}$ then for any $0\leq \alpha \leq h^{-1}(R)$ and any $n$ large enough, there exists a set $S\subseteq[n]$ with $|S|\geq n\alpha$ that is $\;2^{n\beta}$-shattered by $\m{C}$, where
  \begin{align}\label{eq:beta}
    \beta = (1-\alpha)\cdot h\left(\frac{h^{-1}(R)-\alpha}{1-\alpha}\right)
\end{align}
\end{corollary}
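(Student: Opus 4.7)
The plan is to derive the corollary from Lemma \ref{lem:soft_sauer} by contradiction. Suppose no set $S \subseteq [n]$ of size at least $n\alpha$ is $2^{n\beta}$-shattered by $\m{C}$. Then the maximal subset that is $k$-shattered, with $k = 2^{n\beta}$, has size at most $n\alpha - 1$. We apply Lemma \ref{lem:soft_sauer} with $d = \lceil n\alpha \rceil$ and seek to show that the resulting upper bound on $|\m{C}|$ grows like $2^{nR}$ times a subexponential factor, which for $n$ large enough contradicts $|\m{C}| = 2^{n(R+\varepsilon)}$.

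First, I would identify the threshold $t^*$. Setting $t^* = \lceil n \eta^* \rceil$, the condition ${n-d \choose t^* - d} \geq 2^{n\beta}$ becomes, by Stirling, approximately $(1-\alpha) h\!\left(\frac{\eta^*-\alpha}{1-\alpha}\right) \geq \beta$. Comparing with the definition \eqref{eq:beta} of $\beta$, the smallest such $\eta^*$ is $\eta^* = h^{-1}(R)$, so $t^* \approx n h^{-1}(R) \leq n/2$.

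Next, I would bound the two summands of Lemma \ref{lem:soft_sauer}. Since $t^* \leq n/2$ for all admissible $R$, the first sum satisfies $\sum_{t=1}^{t^*}\binom{n}{t} \leq (t^*+1)\binom{n}{t^*}$, and by Stirling $\binom{n}{t^*} \leq 2^{n h(h^{-1}(R))} = 2^{nR}$. For the second sum, the ratio $\binom{t^*}{d}/\binom{t}{d} = \prod_{i=t^*+1}^t (1-d/i)$ is at most $1$, so $\sum_{t=t^*+1}^n \binom{t^*}{d}/\binom{t}{d} \leq n$, and the second term is at most $n \binom{n}{t^*} \leq n \cdot 2^{nR}$. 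Combining, $|\m{C}| \leq \mathrm{poly}(n)\, 2^{nR}$, contradicting $|\m{C}| = 2^{n(R+\varepsilon)}$ for $n$ large enough.

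The most delicate step is verifying that the choice $t^* \approx n h^{-1}(R)$ is indeed both (i) feasible in the sense that ${n-d \choose t^*-d} \geq 2^{n\beta}$ (and therefore $t^*$ exists and is at most the value used above), and (ii) at most $n/2$ so that $\binom{n}{t^*}$ is the dominant binomial coefficient, with $h(t^*/n) = R$. The boundary cases $\alpha = 0$ (giving $\beta = R$, trivially true with $S = \emptyset$) and $\alpha = h^{-1}(R)$ (giving $\beta = 0$, recovering classical Sauer–Perles–Shelah) are both consistent with this calculation, which serves as a sanity check. The $\varepsilon > 0$ slack absorbs the polynomial factors in $n$ arising from Stirling approximations and from the two crude bounds above.
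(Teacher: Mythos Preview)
Your proposal is correct and follows essentially the same route as the paper: argue by contradiction, identify $t^*\approx n h^{-1}(R)$ from the defining inequality $\binom{n-d}{t^*-d}\geq 2^{n\beta}$ (the paper writes this as $\gamma_n=\alpha+(1-\alpha)h^{-1}(\beta/(1-\alpha))+o(1)$, which simplifies to $h^{-1}(R)$ once $\beta$ is substituted), and then bound the expression in Lemma~\ref{lem:soft_sauer} by $2^{n(R+o(1))}$. Your treatment is slightly more explicit than the paper's in separately bounding the two sums, but the argument is the same.
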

\begin{proof}
Let $0\leq \alpha \leq h^{-1}(R)$ and assume to the contrary that no subset of size $d=n\alpha$ is $2^{n\beta}$-shattered by $\m{C}$. Denote $t^*=\gamma_n n$, and write
\begin{align*}
\frac{1}{n}\log{n-d\choose t^*-d} &= \frac{n-d}{n} \left(h\left(\frac{t^*-d}{n-d}\right) + o(1)\right) \\
&= (1-\alpha + o(1))h\left( \frac{\gamma_n - \alpha}{1-\alpha}\right)
\end{align*}
We can set $\gamma_n$ to the minimal value guaranteeing that the above is at least $\beta$, which is $\gamma_n =\alpha+(1-\alpha)h^{-1}\left(\frac{\beta}{1-\alpha}\right) + o(1)$. Invoking Lemma \ref{lem:soft_sauer}, it must then be that $|\m{C}| \leq 2^{n(h(\gamma_n) + o(1))} = 2^{n(R+o(1))}$, contradicting the assumption.
\end{proof}

\subsection{The Binary Adder Channel with an Additional Common Message}
In the Weldon-type arguments mentioned above, the number of $S$-complement pairs was bounded by $3^{|\overline{S}|}$, thereby ignoring the sumset structure. As we shall see in the next subsection, this structure can be accounted for by partitioning each codebook according to its projection on $S$, which naturally gives rise to a zero-error communication problem with an additional common message of rate at most $|S|/|\overline{S}|$. Upper bounding the corresponding admissible sum-rate in this new setup can in turn be translated into an upper bound on the number of $S$-complement pairs in our original setup.

More precisely, assume that there are three messages $W_i\in[2^{nr_i}]$, $i=0,1,2$, to be conveyed to the receiver over the binary adder channel, where the first user has access to the messages $(W_0,W_1)$ and the second user has access to the messages $(W_0,W_2)$. The Shannon capacity region for this problem was found by Slepian and Wolf~\cite{SW73} to be the set of all rate triplets satisfying
\begin{align}\label{eq:sw}
r_1&\leq H(X_1|U),\nonumber\\
r_2&\leq H(X_2|U), \nonumber\\
r_1+r_2 &\leq H(X_1+X_2|U),\nonumber\\
r_0+r_1+r_2 &\leq H(X_1+X_2)
\end{align}
for some $P_{U,X_1,X_2}=P_U P_{X_1|U} P_{X_2|U}$, where $X_1$ and $X_2$ are binary random variables and the random variable $U$ has a finite support.

A coding scheme for this problem consists of a \textit{system} $\m{V}$, which is a set of codebook pairs $\{\m{C}_{1,i},\m{C}_{2,i}\}_{i=1}^{M_0}$, where each $\m{C}_{1,i}$ (resp.  $\m{C}_{2,i}$) is a codebook in $\{0,1\}^n$ with fixed cardinality $|\m{C}_{1,i}| = M_1$ (resp. $|\m{C}_{2,i}| = M_2$). We say that $\m{V}$ is a \textit{zero-error system} if each pair $(\m{C}_{1,i},\m{C}_{2,i})$ is a zero-error codebook pair, and the sumsets $\m{C}_{1,i}+ \m{C}_{2,i}$ are mutually disjoint. A triplet $(r_0,r_1,r_2)$ is called admissible if there exists a zero-error system $\m{V}$ with $M_\ell = 2^{n(r_\ell+o(1))}$ for $\ell\in\{0,1,2\}$.

Clearly, any admissible triplet must satisfy~\eqref{eq:sw}. The bounds we obtain in this subsection are based on outer bounding this latter region. More specifically, as will become clear in the next subsection, our goal is to upper bound the maximal sum of admissible rates $r_0+r_1+r_2$ as a function of $r_0$ and $r_1$. Although the bounds in~\eqref{eq:sw} are given in a single-letter form, in order to guarantee the inadmissibility of a rate triplet, one must go over all valid distributions $P_{U,X_1,X_2}$. While it is not difficult to show that for our needs there is no loss of generality in considering only random variables $U$ with cardinality no greater than $3$, the number of remaining parameters makes the evaluation of~\eqref{eq:sw} within a satisfactory resolution infeasible for a brute-force grid search. Instead, the following lemma provides an analytic upper bound on the sum-capacity as a function of $r_0$ and $r_1$, in terms of the solution to a single-parameter optimization problem. The proof is omitted due to space limitations, but can be found in the full version of this paper~\cite{os14}.


\begin{lemma}\label{lem:sw}
  Let $L(\eta)$ and $J(p,\eta)$ be as defined in~\eqref{eq:L} and~\eqref{eq:J}. If $(r_0,r_1,r_2)$ is admissible, then
\begin{align*}
  r_0+r_1+r_2 \leq \max_{h^{-1}(r_1)\leq \eta\leq \frac{1}{2}} \min\{L(\eta),\, J(h^{-1}(r_1),\eta)+r_0\}
\end{align*}
\end{lemma}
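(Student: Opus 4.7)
The plan is to begin with the Slepian--Wolf characterization~\eqref{eq:sw} of admissible $(r_0, r_1, r_2)$: there must exist a joint distribution $P_U P_{X_1|U} P_{X_2|U}$ satisfying all four inequalities. Setting $p_u = P(X_1=1 \mid U=u)$, $q_u = P(X_2=1 \mid U=u)$, and $\eta = P(X_1+X_2=1) = \Expt[p_U \star q_U]$, the strategy is to combine two of the Slepian--Wolf bounds to obtain
\begin{align*}
r_0+r_1+r_2 &\leq H(X_1+X_2), \\
r_0+r_1+r_2 &\leq r_0 + H(X_1+X_2|U),
\end{align*}
and to bound both right-hand sides from above as functions of $\eta$ and $r_1$.

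The first bound is immediate: since $X_1+X_2 \in \{0,1,2\}$ takes the middle value with probability $\eta$, its entropy is maximized when the two extreme values are equiprobable, giving $H(X_1+X_2) \leq h(\eta) + (1-\eta) = L(\eta)$. For the second, the goal is to show $H(X_1+X_2|U) \leq J(p, \eta)$ with $p = h^{-1}(r_1)$. The key constraint is $r_1 \leq H(X_1|U) = \Expt[h(p_U)]$, which forces the conditional distributions of $X_1$ not to be too peaked on average; the constraint involving $r_2$ is not used.

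The bulk of the proof is thus the constrained optimization
\begin{align*}
\max \Expt[\omega(p_U, q_U)] \quad \text{s.t.} \quad \Expt[p_U \star q_U] = \eta,\ \Expt[h(p_U)] \geq r_1,
\end{align*}
where $\omega(p,q)$ denotes the entropy of $X_1+X_2$ when $X_1 \sim \textrm{Bern}(p)$ and $X_2 \sim \textrm{Bern}(q)$ are independent. Since the objective and constraints are linear in the distribution of $U$, the optimum equals the upper concave envelope of $\omega$ subject to these two constraints, and a Carath\'eodory-type argument restricts attention to distributions on $U$ supported on at most three atoms. A Lagrangian/KKT analysis then identifies two regimes: when $\eta \geq p \star p$ the entropy constraint is inactive, and the optimum is the symmetric i.i.d.\ choice $p_u = q_u = s$ with $s = \tfrac{1}{2}(1-\sqrt{1-2\eta})$, yielding $J(p,\eta) = 2h(s) - \eta$; when $\eta < p \star p$ the entropy constraint becomes active and the optimum is a two-atom mixture, one atom sitting at $p_u = q_u = p$ (saturating the entropy bound with $\pi_u = p \star p$) and the other pushing $\pi_u$ in the opposite direction while keeping $h(p_u)$ high.

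The main obstacle is the concave-envelope computation in this second regime: one must verify via KKT conditions that no asymmetric $(p_u, q_u)$ choice can beat a symmetric two-atom mixture, and then carry out the algebra to express the resulting $\Expt[\omega(p_U, q_U)]$ in the compact form shown in the second branch of~\eqref{eq:J}. Once both cases are established, the admissible triplet satisfies $r_0+r_1+r_2 \leq \min\{L(\eta),\, r_0+J(p,\eta)\}$ for the specific $\eta$ determined by the distribution; a short supplementary argument (using $\Expt h(p_U) \leq h(\Expt p_U)$ together with the symmetry $(X_1,X_2) \mapsto (1-X_1, 1-X_2)$ which fixes $\eta \mapsto \eta$ but swaps the two extreme symbols) shows that this $\eta$ can be taken in $[h^{-1}(r_1), \tfrac{1}{2}]$, yielding the claimed $\max$-$\min$ bound upon taking the worst case over $\eta$.
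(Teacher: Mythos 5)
The paper itself does not prove this lemma (it is deferred to the full version~\cite{os14}), so I am judging your argument on its own terms. Your overall strategy is surely the intended one: combine $r_0+r_1+r_2\leq H(X_1+X_2)\leq L(\eta)$ with $r_1+r_2\leq H(X_1+X_2|U)$, and use $r_1\leq H(X_1|U)=\Expt[h(p_U)]$ to constrain the conditional term. Your treatment of $L(\eta)$ and of the first branch of $J$ is essentially right, modulo one thing you should make explicit: passing from the pointwise bound $\omega(p_u,q_u)\leq\phi(p_u\star q_u)$, where $\phi(\eta')\triangleq 2h\bigl(\tfrac{1}{2}(1-\sqrt{1-2\eta'})\bigr)-\eta'$, to $\Expt[\omega(p_U,q_U)]\leq\phi(\eta)$ requires that $\phi$ be concave and increasing on $[0,\tfrac{1}{2}]$; this needs to be checked, not just absorbed into the phrase ``concave envelope.''

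The genuine gap is the second branch of $J$, which you correctly identify as the main obstacle but then dispatch with a claimed extremal structure that is both unverified and, as far as I can tell, wrong. First, a mixture of \emph{symmetric} atoms $p_u=q_u$ is not even feasible in the regime $\eta<p\star p$: the map $h(a)\mapsto a\star a$ on $a\in[0,\tfrac{1}{2}]$ is increasing and convex, so $\eta=\Expt[p_U\star p_U]\geq p\star p$ whenever $\Expt[h(p_U)]\geq h(p)$. Hence ``no asymmetric $(p_u,q_u)$ choice can beat a symmetric two-atom mixture'' cannot be the right assertion; the optimizer must use asymmetric atoms. Second, the closed form does not look like the value of a two-atom mixture: defining $q'$ by $p\star q'=\eta$ and $m=\tfrac{1}{2}(p+q')$, one checks that $\frac{1-\eta-p\star p}{\sqrt{1-2(p\star p)}}=1-2m$, so the second branch of~\eqref{eq:J} is exactly $2h(m)-m\star m=\phi(m\star m)$ --- a single evaluation of the \emph{unconstrained} envelope at a shifted argument, which is the signature of a symmetrization/Mrs.~Gerber-type relaxation rather than of an exact Lagrangian solution. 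Consistently with this, at $p=0.3$, $\eta=0.34$ one gets $J=2h(0.2)-0.32\approx 1.124$, while the feasible single atom $(p_u,q_u)=(0.3,0.1)$ gives only $\omega\approx 1.101$ and the mixtures of the type you describe do worse still; so ``carrying out the algebra'' on your guessed optimizer would not reproduce the stated formula. More importantly, since $\omega$ is not concave in $(p_u,q_u)$, exhibiting a KKT point does not certify the \emph{upper} bound you need --- you must either supply the actual relaxation that yields $\phi(m\star m)$ or a global-optimality argument. Two smaller repairs: your symmetry argument for $\eta\leq\tfrac{1}{2}$ fails, since complementing both inputs maps $\eta\mapsto\eta$; the case $\eta>\tfrac{1}{2}$ should instead be closed by $r_0+r_1+r_2\leq L(\eta)\leq L(\tfrac{1}{2})=\min\{L(\tfrac{1}{2}),J(p,\tfrac{1}{2})+r_0\}$. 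And the lower end $\eta\geq h^{-1}(r_1)$ needs the pointwise inequality $p_u\star q_u\geq\min(p_u,1-p_u)$ before Jensen can be applied to $h$.
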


\begin{remark}
  Note that it can be shown that the maximization can be further restricted to $h^{-1}(r_1)\star h^{-1}(r_2)\leq \eta \leq \frac{1}{2}$. This however is not useful for our purposes.
\end{remark}

The following lemma is not necessary for the proof of Theorem~\ref{thm:main}, but may be of independent interest.
\begin{lemma}\label{lem:sumsw}
The maximal sum of achievable rates (for a vanishing error probability) over the binary adder channel with an additional common message, as a function of the rate of the common message rate $r_0$, is
\begin{align}
  r_0+r_1+r_2 = &\max_{0\leq \eta\leq \frac{1}{2}} \min\{h(\eta)+1-\eta,\nonumber\\
  &2h\left(\tfrac{1}{2}\left(1-\sqrt{1-2\eta}\right)\right)-\eta+r_0\}\label{eq:swsumcap}
\end{align}
\end{lemma}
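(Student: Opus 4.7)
My plan is to handle both directions of the claimed equality via a single-parameter analysis of the Slepian--Wolf region~\eqref{eq:sw}, with parameter $\eta\triangleq\Pr(X_1+X_2=1)$. For the converse, the two binding constraints in~\eqref{eq:sw} combine to $r_0+r_1+r_2\leq \min\{H(X_1+X_2),\,r_0+H(X_1+X_2\mid U)\}$, and I will bound each of these entropies from above by a function of $\eta$. The unconditional bound $H(X_1+X_2)\leq h(\eta)+1-\eta=L(\eta)$ is immediate, since $X_1+X_2$ is ternary with mass $\eta$ on $\{1\}$ and at most $1$ bit of entropy on the $\{0,2\}$-fiber.

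The conditional bound is the technical heart of the argument. For each $u$, conditional independence gives $X_i\mid U=u\sim\mathrm{Bern}(p_{i,u})$. A short one-parameter calculation in $c=p_{1,u}p_{2,u}$, subject to the affine relation $p_{1,u}+p_{2,u}=\eta_u+2c$ with $\eta_u\triangleq\Pr(X_1+X_2=1\mid U=u)$, shows that the feasible $c$-set is $[0,c_-]\cup[c_+,1-\eta_u]$ with $c_\pm=((1-\eta_u)\pm\sqrt{1-2\eta_u})/2$, that the entropy is maximized at the endpoints $c_-,c_+$ -- both corresponding to $p_{1,u}=p_{2,u}=p_-(\eta_u)\triangleq\tfrac{1}{2}(1-\sqrt{1-2\eta_u})$ -- and therefore $H(X_1+X_2\mid U=u)\leq 2h(p_-(\eta_u))-\eta_u\triangleq J_0(\eta_u)$. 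To pass from this pointwise bound to a bound on $H(X_1+X_2\mid U)$ in terms of $\eta=\Expt_U[\eta_U]$, I invoke Jensen on $J_0$. Concavity of $J_0$ on $[0,\tfrac{1}{2}]$ follows from checking that $d h(p_-(\eta))/d\eta=\ln((1+s)/(1-s))/(2s\ln 2)$, with $s=\sqrt{1-2\eta}$, is monotone in $s$ (via the Taylor expansion $\ln((1+s)/(1-s))/(2s)=\sum_{k\geq 0}s^{2k}/(2k+1)$) and hence monotone in $\eta$. Combining the two entropy bounds and taking the supremum over $\eta\in[0,\tfrac{1}{2}]$ yields the ``$\leq$'' direction.

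For achievability, I exhibit for each $\eta^*\in[0,\tfrac{1}{2}]$ a distribution that attains both bounds with equality. Take $U\sim\mathrm{Bern}(\tfrac{1}{2})$, and conditionally let $(X_1,X_2)$ be i.i.d.\ $\mathrm{Bern}(p^*)$ given $U=0$ and i.i.d.\ $\mathrm{Bern}(1-p^*)$ given $U=1$, with $p^*=p_-(\eta^*)$. The symmetry $U\mapsto 1-U$ equalizes $\Pr(X_1+X_2=0)$ and $\Pr(X_1+X_2=2)$, so $H(X_1+X_2)=L(\eta^*)$, while a direct computation via $H(X_1,X_2\mid U)=2h(p^*)$ and $H(X_1\mid X_1+X_2,U)=\eta^*$ gives $H(X_1+X_2\mid U)=2h(p^*)-\eta^*=J_0(\eta^*)$. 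Plugging into~\eqref{eq:sw} one obtains admissible rate triples with $r_0+r_1+r_2=\min\{L(\eta^*),J_0(\eta^*)+r_0\}$, matching the upper bound after maximizing over $\eta^*$. The main obstacle in the whole argument is the per-fiber entropy maximization together with establishing concavity of $J_0$; everything else is routine Slepian--Wolf bookkeeping.
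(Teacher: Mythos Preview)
Your approach is essentially the paper's: the same achieving distribution (your construction is exactly $X_i=U\oplus Z_i$ in disguise), and your converse is precisely the $\eta$-parametrized analysis of the Slepian--Wolf constraints that the paper defers to Lemma~\ref{lem:sw}. In that sense you have fleshed out what the paper only sketches.

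There is one small but real omission in your converse. Your per-fiber computation tacitly assumes $\eta_u\le\tfrac12$: when $\eta_u>\tfrac12$ the quantities $c_\pm$ are complex, the feasible $c$-set is the full interval $[0,1-\eta_u]$, and the conditional entropy can reach $L(\eta_u)=h(\eta_u)+1-\eta_u$ (take $p_{1,u}=1-p_{2,u}$), which is \emph{not} bounded by $J_0(\eta_u)$ since the latter is undefined. Your Jensen step therefore does not go through as stated when some fibers have $\eta_u>\tfrac12$ while $\eta=\Expt[\eta_U]\le\tfrac12$. The fix is short: extend $J_0$ to $[0,1]$ by setting $\tilde J_0(\eta)=L(\eta)$ for $\eta>\tfrac12$; then $H(X_1+X_2\mid U=u)\le\tilde J_0(\eta_u)$ holds for all $u$, and $\tilde J_0$ is concave on $[0,1]$ because each piece is concave and the one-sided derivatives at $\eta=\tfrac12$ satisfy $J_0'(\tfrac12^-)=\tfrac{2}{\ln 2}-1>-1=L'(\tfrac12^+)$. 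Jensen now gives $H(X_1+X_2\mid U)\le\tilde J_0(\eta)=J_0(\eta)$ for $\eta\le\tfrac12$, and the case $\eta>\tfrac12$ is handled trivially via $H(X_1+X_2)\le L(\eta)\le L(\tfrac12)=\tfrac32$, which already matches the RHS at $\eta=\tfrac12$.
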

\begin{proof}[Proof sketch]
The upper bound on $r_0+r_1+r_2$ follows as a corollary of Lemma~\ref{lem:sw}, by noting that for any $0\leq r_1\leq 1$ we have $J(h^{-1}(r_1),\eta)\leq2h\left(\tfrac{1}{2}\left(1-\sqrt{1-2\eta}\right)\right)-\eta$. To see that the right hand side of ~\eqref{eq:swsumcap} is achievable, let $\eta^*$ be the maximizer of~\eqref{eq:swsumcap} and evaluate the entropies in~\eqref{eq:sw} with the following distribution:
\begin{align}\label{eq:opt_dist}
\nonumber &X_1 = U \oplus Z_1, \; X_2 = U \oplus Z_2 \\
&U\sim \textrm{Bern}\left(\frac{1}{2}\right),\; Z_1\sim\textrm{Bern}(p^*), \; Z_2\sim\textrm{Bern}(p^*)
\end{align}
where $U,Z_1,Z_2$ are mutually independent, and $p^*\leq \tfrac{1}{2}$ satisfies $p^*\star p^* = \eta^*$, i.e., $p^* =  \frac{1}{2}(1-\sqrt{1-2\eta^*})$. \end{proof}

\subsection{Putting it Together}
We are now in a position to prove Theorem \ref{thm:main}. Let $(\m{C}_1,\m{C}_2)$ be a zero-error codebook pair of cardinalities $2^{nR_1}$ and $2^{nR_2}$ respectively. Given this pair, we use Corollary \ref{cor:soft_sauer} to  construct a zero-error system with certain cardinalities, and then apply Lemma \ref{lem:sw} to obtain constraints on that system.

By Corollary \ref{cor:soft_sauer}, for any $\alpha<h^{-1}(R_1)$ there exists a subset of coordinates $S\subset [n]$ of cardinality $n\alpha$ that is $2^{n\beta}$-shattered by $\m{C}_1$, where $\beta$ is given in~\eqref{eq:beta}, all up to an $o(1)$ term. Let $\m{C}_0$ be the family of all binary vectors of length $|S|$, and for any $\b{g}\in\m{C}_0$ let $\m{C}_{1,\b{g}} = \{\bc\in\m{C}_1 : \b{c}(S)= \b{g}\}$. Define $\m{C}_{2,\b{g}}$ similarly, and note that $\{\m{C}_{j,\b{g}}\}_{\b{g}\in\m{C}_0}$ is a partition of $\m{C}_j$ for each $j\in\{1,2\}$.

By construction, $|\m{C}_{1,\b{g}}| \geq 2^{n\beta}$. We can therefore arbitrarily choose $\wt{\m{C}}_{1,\b{g}}\subseteq \m{C}_{1,\b{g}}$ such that $|\wt{\m{C}}_{1,\b{g}}| = 2^{n\beta}$. For each $\b{g}$ with $|\m{C}_{2,\b{g}}|>0$, arbitrarily choose $\wt{\m{C}}_{2,\b{g}}\subseteq \m{C}_{2,\b{g}}$ such that $\log|\wt{\m{C}}_{2,\b{g}}| = \lfloor\log|\m{C}_{2,\b{g}}|\rfloor$. Note that this guarantees that $|\wt{\m{C}}_{2,\b{g}}| = 2^k$ for some integer $0\leq k\leq nR_2$, and that $|\wt{\m{C}}_{2,\b{g}}| \geq |\m{C}_{2,\b{g}}|/2$. Moreover, there must exist an integer $k'$ with the property that the union of all $\wt{\m{C}}_{2,\b{g}}$ of cardinality $2^{k'}$ contains at least $\tfrac{1}{2(nR_2+1)}2^{nR_2}$ vectors. Let $\m{G}$ be the set of all $\b{g}\in\m{C}_0$ that correspond to this $k'$, and note that by construction $|\m{G}|= 2^{n\alpha'}$ for some $\alpha'\leq \alpha$. Moreover,
\begin{align*}
|\wt{\m{C}}_{2,\b{g}}| = 2^{k'} \geq  \tfrac{1}{2(nR_2+1)}2^{n(R_2-\alpha')}  
\end{align*}
for all $\b{g}\in\m{G}$.

Let $\overline{\b{g}} =\b{g}\oplus\b{1}_{|S|}$ be the binary complement of $\b{g}$, and define the system $\m{V} = \{(\wt{\m{C}}_{1,\overline{\b{g}}},\wt{\m{C}}_{2,\b{g}})\}_{\b{g}\in\m{G}}$. Since the original $\m{C}_1$ and $\m{C}_2$ form a zero-error codebook pair, then $\m{V}$ is trivially a zero-error system. Moreover, since any $\b{c}_1\in \wt{\m{C}}_{1,\overline{\b{g}}}$ and $\b{c}_2\in \wt{\m{C}}_{2,\b{g}}$ are an $S$-complement pair~\eqref{eq:all1}, the projection 
\begin{align*}
\m{V}_{\overline{S}} \triangleq  \{(P^+_{\overline{S}}(\wt{\m{C}}_{1,\overline{\b{g}}}), P^+_{\overline{S}}(\wt{\m{C}}_{2,\b{g}}))\}_{\b{g}\in\m{G}}  
\end{align*}
of $\m{V}$ onto $\overline{S}$ is also a zero-error system, over $|\overline{S}| = n(1-\alpha)$ coordinates.

We have thus shown that given a zero-error codebook pair over $n$ coordinates with cardinalities $2^{nR_1}$ and $2^{nR_2}$, we can construct a zero-error system $\m{V}_{\overline{S}}$ over $m= n(1-\alpha)$ coordinates with cardinalities $M_0 =  2^{mr_0}$, $M_1 = 2^{mr_1}$ and $M_2 = 2^{m\left(r_2+o(1)\right)}$, where 
\begin{align*}
r_0=\frac{\alpha'}{1-\alpha}, \quad r_1=\frac{\beta}{1-\alpha},\quad r_2=\frac{R_2-\alpha'}{1-\alpha}  
\end{align*}
Thus for this system $r_0+r_1+r_2 = \frac{R_2+\beta}{1-\alpha}$, and by Lemma \ref{lem:sw}, recalling that $\alpha'\leq \alpha$, we have that
\begin{align*}
\frac{R_2+\beta}{1-\alpha}\leq \max_{h^{-1}\left(\frac{\beta}{1-\alpha}\right)\leq \eta\leq \frac{1}{2}} &\min\bigg\{L(\eta),\nonumber\\
&J\left(h^{-1}\left(\frac{\beta}{1-\alpha}\right),\eta\right)+\frac{\alpha}{1-\alpha}\bigg\}
\end{align*}

The theorem now follows by substituting $\beta$ from Corollary \ref{cor:soft_sauer}, and noting that the above inequality holds for any $0\leq \alpha \leq h^{-1}(R_1)$.

\section{Proof of Lemma \ref{lem:soft_sauer}}\label{sec:proof_soft_sauer}
For the purpose of the proof, it will be convenient to represent any binary vector $\b{c}\in\{0,1\}^n$ by a subset of $F\subseteq [n]$ that contains the indices of the coordinates where $\b{c}$ equals $1$. Accordingly, any codebook $\m{C}\subseteq \{0,1\}^n$ can be represented by the corresponding family $\m{F}$ of subsets of $[n]$. Similarly, the multiset projection $P_S^+(\m{C})$ of $\m{C}$ on $S$ corresponds to
\begin{align*}
P_S^+(\m{F}) \triangleq \left\{F\cap S: F\in\m{F} \right\} \qquad \text{with multiplicities}
\end{align*}
and $S$ is $k$-shattered by $\m{C}$ (equivalently by $\m{F}$) means that $P_S^+(\m{F})$ contains each subset of $S$ with multiplicity at least $k$.

Let $\m{C}$ be a codebook and let $\m{F}$ be the corresponding family of subsets on $[n]$. We start by applying the shifting argument introduced in \cite{Alon83} on $\m{F}$, to construct another family $\m{G}$ of the same cardinality, such that if $S$ is $k$-shattered by $\m{G}$ then it is also $k$-shattered by $\m{F}$. Furthermore, $\m{G}$ will be \textit{monotone}, i.e., will have the property that if $G\in\m{G}$ then all subsets of $G$ are in $\m{G}$.

Set $\m{G}=\m{F}$. If $\m{G}$ is already monotone, we are done. Otherwise there exists some $i\in[n]$ such that  the set 
\begin{align*}
\wt{\m{G}}_i \triangleq  \{G\in\m{G} : i\in G,\,G\setminus \{i\}\not\in\m{G}\}  
\end{align*}
is not empty. Update $\m{G}$ according to the rule:
\begin{align}\label{eq:update}
\m{G} \leftarrow \left(\m{G}\setminus \wt{\m{G}}_i\right) \cup \left(\wt{\m{G}}_i - i\right)
\end{align}
where $\wt{\m{G}}_i - i$ is the family of subsets obtained from $\wt{\m{G}}_i$ by removing the element $i$ from each subset. The process continues until $\m{G}$ is monotone, and is clearly guaranteed to terminate in finite time. By construction, $|\m{G}|=|\m{F}|$.

We now show that if $S$ is $k$-shattered by $\m{G}$ then it is also $k$-shattered by $\m{F}$. Let $\m{G}'$ be the family of subsets before the operation~\eqref{eq:update} on some element $i$, and let $\m{G}$ be the family obtained after that operation. Suppose $S$ is $k$-shattered by $\m{G}$. It now suffices to show that $S$ is also $k$-shattered by $\m{G}'$. If $i\not\in S$ then clearly $P_S^+(\m{G}) = P_S^+(\m{G}')$, hence this does not affect the $k$-shatterdness of $S$. Suppose $i\in S$, and let 
\begin{align*}
\m{G}_i  \triangleq \{G\in\m{G} : i\in G\}.
\end{align*}
Then $\m{G}_i\subseteq \m{G}'$ since the update rule~\eqref{eq:update} does not add elements to subsets. Since $\m{G}$ $k$-shatters $S$, then every subset of $S$ that contains $i$ has multiplicity at least $k$ in $P_S^+(\m{G}_i)\subseteq P_S^+(\m{G}')$. Recalling that $\m{G}_i\subseteq \m{G}\cap\m{G}'$, we have that $\m{G}_i - i \subseteq \m{G}'$ since otherwise some replacement would have occurred in~\eqref{eq:update}. Since $\m{G}$ $k$-shatters $S$, then every subset of $S$ that does not contain $i$ has multiplicity at least $k$ in $P_S^+(\m{G}_i - i)\subseteq P_S^+(\m{G}')$.


The Lemma now follows directly from the next proposition.

\begin{proposition}
  If $\m{G}$ is a monotone family of subsets of $[n]$ with the property that no subset of cardinality $d$ is $k$-shattered by $\m{G}$, then
  \begin{align*}
     |\m{G}|\leq \sum_{t=1}^{t^*}{n\choose t}  + {n\choose t^*}\sum_{t=t^*+1}^{n}\frac{{t^*\choose d}}{{t \choose d}}
  \end{align*}
where $t^*$ is the smallest integer $t$ satisfying ${n-d \choose t-d} \geq k$ if such an integer exists, and $t^*=n$ otherwise.
\end{proposition}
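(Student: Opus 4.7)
The plan is to exploit the monotonicity of $\m{G}$ to reduce the $k$-shattering condition to a simple superset-counting condition, and then to bound $|\m{G}|$ by combining a double-counting inequality with the trivial size bound.

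My first step would be to establish the following characterization: for a monotone family $\m{G}$, a set $S\subseteq[n]$ is $k$-shattered by $\m{G}$ if and only if it has at least $k$ supersets in $\m{G}$, i.e., $|\{G\in\m{G}:G\supseteq S\}|\geq k$. The forward direction is immediate, since $G\cap S=S$ is equivalent to $S\subseteq G$, so the multiplicity of $T=S$ in $P_S^+(\m{G})$ equals exactly the number of supersets of $S$ in $\m{G}$. For the converse, given $k$ distinct supersets $G_1,\ldots,G_k$ of $S$ in $\m{G}$ and any $T\subseteq S$, I would consider $H_i\triangleq G_i\setminus(S\setminus T)=(G_i\setminus S)\cup T$; these are distinct (since $H_i\setminus S=G_i\setminus S$ pins down $G_i$), they lie in $\m{G}$ by monotonicity, and they satisfy $H_i\cap S=T$, so every $T\subseteq S$ has multiplicity at least $k$ in $P_S^+(\m{G})$.

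Using this characterization, the hypothesis translates to: every $S$ with $|S|=d$ has at most $k-1$ supersets in $\m{G}$. Writing $m_t$ for the number of sets of size $t$ in $\m{G}$, I would then double-count pairs $(S,G)$ with $|S|=d$, $G\in\m{G}$, and $S\subseteq G$, obtaining
\begin{align*}
\sum_{t\geq d} m_t\binom{t}{d}=\sum_{S:|S|=d}|\{G\in\m{G}:G\supseteq S\}|\leq (k-1)\binom{n}{d},
\end{align*}
which yields the per-size estimate $m_t\leq (k-1)\binom{n}{d}/\binom{t}{d}$ valid for every $t\geq d$.

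The final step is to combine two bounds on $m_t$: the trivial bound $m_t\leq\binom{n}{t}$ (used for $t\leq t^*$) and the double-counting bound (used for $t>t^*$). The value $t^*$ is precisely the threshold at which the double-counting bound becomes the sharper of the two; the defining inequality $\binom{n-d}{t^*-d}\geq k$, combined with the identity $\binom{n}{t^*}\binom{t^*}{d}=\binom{n}{d}\binom{n-d}{t^*-d}$, gives $(k-1)\binom{n}{d}/\binom{t}{d}\leq \binom{n}{t^*}\binom{t^*}{d}/\binom{t}{d}$, which is exactly the per-$t$ estimate appearing in the claim. Summing over $t$ then recovers the stated inequality. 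The main combinatorial content lies in the monotonicity-based characterization of $k$-shattering; once that is in place, the remainder is routine double counting and manipulation of binomial coefficients.
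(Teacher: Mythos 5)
Your proof is correct and follows essentially the same route as the paper's: both arguments count incidences between $d$-subsets and members of $\m{G}$ to bound the number $m_t$ of members of each cardinality $t$ by $\min\bigl\{\binom{n}{t},\,O(k)\binom{n}{d}/\binom{t}{d}\bigr\}$, and then split the sum at the threshold $t^*$. Your explicit characterization of $k$-shattering for monotone families (at least $k$ supersets of $S$, via the sets $H_i=(G_i\setminus S)\cup T$) is a clean justification of a step the paper only asserts, and your single global double count in place of the paper's per-cardinality pigeonhole is an immaterial repackaging of the same averaging argument.
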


\begin{proof}
Let $\m{G}_t$ denote the family of all subsets in $\m{G}$ with cardinality $t$. For $t\geq d$, every $G\in\m{G}_t$ has exactly ${t \choose d}$ subsets of cardinality $d$. There is a total of ${n \choose d}$ subsets of cardinality $d$. Hence by a simple counting argument there must exist at least one subset $S$ of cardinality $d$, that is a subset of no less than $|\m{G}_t|{t\choose d}/{n \choose d}$ subsets in $\m{G}_t$. Recalling that $\m{G}$ is monotone, this implies that $S$ is $|\m{G}_t|{t\choose d}/{n \choose d}$-shattered by $\m{G}$. By our assumption, it must be that
\begin{align*}
  \frac{{t\choose d}|\m{G}_t|}{{n \choose d}} < k,\quad t=d,\ldots,n
\end{align*}
On the other hand, $|\m{G}_t|\leq {n \choose t}$, and therefore
\begin{align*}
  |\m{G}_t| \leq \min\left\{{n \choose t}, \frac{{n\choose d}k}{{t \choose d}}\right\}, \quad t=d,\ldots,n
\end{align*}
Summing over $t$ we get
\begin{align}\label{eq:Gbound}
  |\m{G}| &= \sum_{t=1}^n|\m{G}_t| \leq  \sum_{t=1}^{d-1}{n\choose t} + \sum_{t=d}^n\min\left\{{n \choose t}, \frac{{n\choose d}k}{{t \choose d}}\right\}
\end{align}
Let $t^*$ be the smallest integer $t$ such that ${n \choose t} \geq \frac{{n\choose d}k}{{t \choose d}}$ if such an integer exists. If no such integer $t$ exists, set $t^* = n$. Then
\begin{align*}
  |\m{G}| &\leq \sum_{t=1}^{t^*}{n\choose t}  + \sum_{t=t^*+1}^{n}\frac{{n\choose d}k}{{t^*\choose d}}\cdot\frac{{t^*\choose d}}{{t \choose d}} \\
&\leq \sum_{t=1}^{t^*}{n\choose t}  + {n\choose t^*}\sum_{t=t^*+1}^{n}\frac{{t^*\choose d}}{{t \choose d}} \end{align*}

To complete the proof, note that for any $d\leq t\leq n$ we have ${n\choose t}{t\choose d} = {n\choose d}{n-d\choose t-d}$, hence $t^*$ is the smallest integer $t$ satisfying ${n-d\choose t-d} \geq k$ if such an integer exists, and otherwise $t^*=n$.
\end{proof}

\section{Discussion}
Given a zero-error codebook pair $\m{C}_1,\m{C}_2\subseteq\{0,1\}^n$ with cardinalities $2^{nR_1}$ and $2^{nR_2}$ respectively, our bounding technique was based on a procedure for constructing a zero-error system $\m{V}$ with dimension $(1-\alpha) n$. This was achieved by proving the existence of a subset $S\subset [n]$ of cardinality $\alpha n$, such that the sumset of the projection multisets of each codebook on $S$, i.e., $P^+_S(\m{C}_1)+ P^+_S(\m{C}_2)$ has a member $\b{v}\in\{0,1,2\}^{|S|}$ with a large number of occurrences, say $2^{n\rho}$. This in turn implied that $r_0+r_1+r_2$ for the system is at least $\rho/(1-\alpha)$. To lower bound $\rho$ as a function of $\alpha$ and the cardinalities of the original codebooks, we introduced the soft Sauer-Perles-Shelah Lemma, which enabled us to bound the number of occurrences of the vector $\b{v}=\b{1}_{|S|}$. This lemma offered the additional benefit of a lower bound on $r_1$. We note in passing that the bound obtained on $R_2$ as a function of $R_1$ outperforms previous results even without incorporating the constraint on $r_1$. We suspect that better bounds on $\rho$ can be obtained, possibly for $\b{v}$ other than $\b{1}_{|S|}$.

\bibliographystyle{IEEEtran}
\bibliography{OrBib2}

\end{document}